\newcommand{\Ec}{\mathcal{E}}
\newcommand{\Gc}{\mathcal{G}}
\newcommand{\Rc}{\mathcal{R}}
\newcommand{\Mc}{\mathcal{M}}
\newcommand{\Sc}{\mathcal{S}}
\newcommand{\Lc}{\mathcal{L}}
\newcommand{\Tc}{\mathcal{T}}
\newcommand{\Wc}{\mathcal{W}}
\newcommand{\Uc}{\mathcal{U}}
\newcommand{\Vc}{\mathcal{V}}
\newcommand{\Xc}{\mathcal{X}}
\newcommand{\Rb}{\mathbb{R}}
\newcommand{\diag}{{\rm diag}}
\newcommand{\exampleend}{\hfill $\triangle$}
\newtheorem{lemma}{Lemma}
\newtheorem{remark}{Remark}
\newtheorem{theorem}{Theorem}
\newtheorem{definition}{Definition}
\newtheorem{example}{Example}
\newtheorem{proposition}{Proposition}
\newtheorem{corollary}{Corollary}
\title{\textbf{Finding Conditions for Target Controllability under Christmas Trees}}
\author{Marco Peruzzo, Giacomo Baggio, Francesco Ticozzi%
\thanks{The authors are with the Department of Information Engineering, University of Padova, Italy. Emails: \texttt{marco.peruzzo.5@phd.unipd.it}, \texttt{baggio@dei.unipd.it}, \texttt{ticozzi@dei.unipd.it}.}
}
\begin{document}
\maketitle
\thispagestyle{empty}
\pagestyle{empty}

 \begin{abstract} 
 This paper presents new graph-theoretic conditions for structural target controllability of directed networks. After reviewing existing conditions and highlighting some gaps in the literature, we introduce a new class of network systems, named Christmas trees, which generalizes trees and cacti. We then establish a graph-theoretic characterization of sets of nodes that are structurally target controllable for a simple subclass of Christmas trees. Our characterization applies to general network systems by considering spanning subgraphs of Christmas tree class and allows us to uncover target controllable sets that existing criteria fail to identify.
 \end{abstract}
 
\section{Introduction}

Network systems are widespread in nature, with complex phenomena often emerging from the interactions of numerous simple dynamical units. In recent years, significant efforts have been made to describe and engineer these systems, leading to advances in understanding their structure and dynamics \cite{SS:01,YYL-JJS-ALB:11,FB:24}. The network formalism has proven successful in modeling phenomena across different fields, including social sciences,  biology,~and~engineering~\cite{PA-TR:17,bernardo2021achieving,TV-KR-LA:06,dorfler2013synchronization}.

One of the key challenges for the control community is the controllability analysis of large-scale network systems. Effectively controlling these systems is difficult, as the required control energy can scale exponentially with the size of the network \cite{SZ-FP-FB:2014,GB-SZ:24}. Moreover, the knowledge of system parameters is often not precise and the structure of the system significantly impacts the possibility to steer the system to a target state. To address the latter problem, recent works \cite{YYL-JJS-ALB:11,pequito2015framework,mousavi2017structural} have resorted to the framework of structural controllability introduced by Lin in the 1970s \cite{CTL:74}.
Unlike classical controllability, structural controllability is determined only by the sparsity patterns of system matrices, without requiring knowledge of their exact numerical values.

However, in many applications, it is not necessary to control the entire state of the network. Instead, it is often sufficient to control only a subset of critical nodes. From a control-theoretic viewpoint, this problem is addressed through the notions of output controllability and its special case, target controllability \cite{KM:1990,AB:2014}. 
Although not a new problem \cite{KM:1990}, finding conditions for classical and structural output controllability remains an active area of research, with several studies appearing in recent years \cite{moothedath2019target,BD-JL-MJ:23,AM-CD-MA:2023}. To date, and to the best of our knowledge, only partial graph-theoretic characterizations of structural output and target controllability have been established \cite{SH:1980,AB:2014,moothedath2019target,AM-CD-MA:2023}, with a complete characterization available only for the special case of systems described by~undirected~networks~\cite{LJ-CX-PS:21}.

\noindent\textbf{Contribution.} The contributions of the paper are threefold.
First, we provide an overview of existing graph-theoretic conditions for structural target controllability, highlighting some misleading conjectures, for which we present simple counterexamples. Second, we introduce a new class of structured network systems called Christmas trees. These systems generalize trees and cacti and represent one of the simplest classes of network systems that can span any input-accessible network, including non-structurally controllable ones.
We then focus on a specific subclass of Christmas trees, termed elementary Christmas trees, and derive graph-theoretic conditions for their structural target controllability. Third and finally, we discuss how these new conditions generalize existing ones and illustrate through examples how they enable the identification of sets of target controllable nodes that cannot be detected using~previously~established~criteria.

\noindent\textbf{Organization.}
The rest of the paper is organized as follows. In Sec.~\ref{sec:preliminaries}, we describe the systems considered in this paper and provide some background on their graph-theoretic description and controllability. In Sec.~\ref{sec:existing}, we discuss structural target controllability and review existing graph-theoretic conditions. In Sec.~\ref{sec:christmas}, we introduce Christmas trees and elementary Christmas trees. In Sec.~\ref{sec:target-christmas}, we derive conditions for structural target controllability of elementary Christmas trees and illustrate through examples how these conditions differ from existing ones. Finally, Sec.~\ref{sec:conclusions} concludes the paper with some remarks and directions for future research.

\noindent\textbf{Notation.}
We denote with $\mathbb{R}^{n\times m}$ the set of $n\times m$ matrices with real entries. $[A]_{ij}$, $[A]_{s,:}$ are respectively, the $(i,j)$-th entry and the $s$-th row of of matrix $A\in \mathbb{R}^{n\times m}$. We let $\mathrm{diag}\{a_1,\dots,a_n\}$ indicate the diagonal matrix with diagonal entries $a_1$, \dots, $a_n\in\mathbb{R}$. We use calligraphic letters to denote sets and, for a set \(\mathcal{Z}\), the symbol \(|\mathcal{Z}|\) indicates its cardinality. Finally, for $a,r \in \mathbb{R}$, $\lceil a \rceil$ and $a \mod r$ denote respectively the ceil function, mapping $a$ to the least integer greater than or equal to $a$, and the remainder of the division of $a$ by $r$.

\thispagestyle{empty}
\pagestyle{empty}

\section{Preliminaries}\label{sec:preliminaries}
We consider a linear time-invariant system
\begin{align}\label{eq:sys}
\Sigma: \begin{cases}\dot x(t) = A x(t) + Bu(t) &\\
y(t) = Cx(t) &
\end{cases}
\end{align}
where  $x(t)\in\Rb^n$, $u(t)\in\Rb^m$, and $y(t)\in\Rb^p$ are the state, input, and output of the system at time $t$, respectively, and matrices $A\in\Rb^{n\times n}$, $B\in\Rb^{n\times m}$, and $C\in\Rb^{p\times n}$ are the state, input, and output matrices of the system, respectively.

Equation \eqref{eq:sys} can be interpreted as a network system composed by first-order linear time-invariant subsystems: each component of $x(t)$ represents the state of one of these subsystems.

The system \eqref{eq:sys} is output controllable if for all vectors $y_f\in\Rb^p$ there exists a control input $u(t)$ such that $y(t^*)=y_f$ for some $t^*>0$. 
A well-known condition to assess output controllability relies on the $k$-steps output controllability matrix
\begin{equation}\label{eq:Rc}
\Rc_k = C[B\ AB\ A^2B\ \cdots \ A^{k-1}B ].
\end{equation}
Specifically, \eqref{eq:sys} is output controllable if and only if $\Rc_k$ is full row-rank for some $k\ge n$. While verifying this condition for $k=n$ is sufficient, in some parts of the paper it will be convenient to consider a general $k\ge n$.
A special case of output controllability is target controllability, where the rows of $C$ are constrained to be canonical vectors in $\Rb^n$. This choice of output corresponds to selecting only the
states of a subset of nodes as targets for the control.

In this work, we treat \eqref{eq:sys} as a structured system, that is, a system in which matrices $A$, $B$, $C$ are structured: their entries are either zero or generic real numbers.
To this structured system, we can associate a graph $\Gc=(\Vc,\Ec)$, with $\Vc=\Uc\cup\Xc$, $\Uc =\{u_1,\dots,u_m\}$\footnote{If $\Uc$ consists of a single node, we denote the input node simply by $u$.}, $\Xc=\{1,\dots,n\}$ being the set of input and state nodes, respectively, and $\Ec=\Ec_x\cup \Ec_u$, with $\Ec_x=\{(i,j)\,:\, A_{ji}\ne 0\}$, $\Ec_u=\{(u_i,j)\,:\, B_{ji}\ne 0\}$.

In the remainder of the paper we will often make use of the following definitions which apply to the graph $\Gc$ of a structured system. A \textit{path} $p=\{w_1,\dots,w_s\}\subseteq \Vc$ is a sequence of nodes such that $(w_j,w_{j+1})\in{\Ec}$ for all $ j\in\{1,\dots,s-1\}$. Moreover, the \textit{length} of $p$ is $\ell=s-1$, where $s$ is the number of nodes in the path. Finally, the \textit{weight} of $p$ is the product of the weights of all the edges $(w_j,w_{j+1})\in{\Ec}$ for all $ j\in\{1,\dots,s-1\}$. We distinguish the following classes of paths. A {\em stem} is a path whose nodes are all distinct and such that $w_1\!=\!u_i \in {\cal U}$, $w_j=v\in{\cal X}$. A \textit{cycle} is a path with all distinct nodes, except $w_1\!=\!w_s.$
Among all the paths connecting two nodes, a \textit{shortest path} is a path with minimum number of nodes.

 To study target controllability, we require input accessibility. The structured system \eqref{eq:sys} is \textit{input accessible} if, for every state node $i$,  $i = 1, \dots, n$, there exists a stem ending in $i$. For single-input, input-accessible systems, we label with $\ell_{\text{min}}(x)$ the length of a shortest path from the input node $u$ to $x\in\Xc$ and with $L(\Sc)$ the set of lengths of the paths reaching nodes in $\Sc\subseteq\Xc$ from $u$.

Finally, several conditions in structural controllability theory are given in terms of spanning subgraphs. A set of nodes $\Sc$ is \textit{covered} (or spanned) by a subgraph $\Gc'=(\Vc',\Ec')$ of ${\Gc}$ if ${\cal S}\subseteq$ $\Vc'$, and a graph is covered by a subgraph if the subgraph and the graph have the same vertex set.

\section{Structural target controllability and existing graph-theoretic results}\label{sec:existing}

In this section, we review the notions of structural output and target controllability and discuss existing graph-theoretic conditions to verify such properties.

\begin{definition}[Structural output controllability]
Consider the structured system \eqref{eq:sys}. The system is said to be structurally output controllable if there exists a numerical realization of $A$, $B$, $C$ such that the output controllability matrix $\Rc_k$ in \eqref{eq:Rc} has full row-rank for some $k\geq n$.
\end{definition}

The term structural reflects the fact that structural output controllability is a generic property of \eqref{eq:sys}: if a system is structurally output controllable, then it is output controllable for almost any choice of the generic entries of $A$, $B$, $C$.  
A special case of structural output controllability is structural target controllability.

\begin{definition}[Structural target controllability]
Consider the system \eqref{eq:sys}. The system is said to be structurally target controllable if (i) the rows of $C$ are canonical vectors in $\Rb^n$ (ii) there exists a numerical realization of $A$, $B$ such that ${\Rc}_k$ in \eqref{eq:Rc} has full-row rank for some $k\geq n$.
\end{definition}

With a slight abuse of terminology, in the following we will say that a set $\Tc\subseteq \Xc$ is structurally target controllable, if \eqref{eq:sys} is structurally target controllable for an output matrix $C$ whose rows have non-zero entries indexed by $\Tc$.

Although the concept of structural output controllability is well established, characterizing it in graph-theoretic terms has proven challenging. In particular, \cite{KM:1990} provides a partial characterization by establishing lower and upper bounds on the generic dimension of the output controllable subspace, and presents a counterexample to a conjectured necessary and sufficient condition.

In contrast, more results are available for structural target controllability, some of which emerged only recently. The following is a classic result which follows from the graph-theoretic characterization of the generic dimension of the controllability subspace of the system in \cite{SH:1980}.

\begin{proposition}[Stem-cycle condition]\label{prop:stemcycle}
Consider the structured system \eqref{eq:sys} and assume it is input accessible. A set of nodes $\Tc\subseteq \Xc$ is structurally target controllable if it can be covered by a disjoint union of stems and cycles in $\mathcal{G}$.
\end{proposition}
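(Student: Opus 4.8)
My plan is to work directly with the $k$-steps matrix and reduce the claim to the non-vanishing of a single determinant. Since the rows of $C$ are canonical vectors indexed by $\Tc$, the matrix $\Rc_k = C[B\ AB\ \cdots\ A^{k-1}B]$ is simply the submatrix of $[B\ AB\ \cdots\ A^{k-1}B]$ consisting of the rows indexed by $\Tc$. Each of its entries is a polynomial in the generic entries of $A$ and $B$: the entry in row $t\in\Tc$ and in the column associated with input $u_j$ and power $s$ equals the sum of the weights of all walks from $u_j$ to $t$ of length $s+1$ in $\Gc$. Because a nonzero polynomial admits a numerical realization at which it does not vanish (and indeed is nonzero almost everywhere), $\Tc$ is structurally target controllable as soon as some $|\Tc|\times|\Tc|$ minor of $\Rc_k$ is not identically zero. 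By the Cayley--Hamilton theorem the column space of $[B\ AB\ \cdots\ A^{k-1}B]$ stabilizes at $k=n$, so I may fix $k=n$; all the walk lengths I will use stay below $n$, since a shortest path followed by at most one traversal of a cycle visits at most $n$ distinct nodes.

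The columns of the minor I would read off from the covering family of vertex-disjoint stems and cycles. For a target node $t$ lying on a stem rooted at an input $u_j$, I assign the column $(u_j, s)$ with $s+1$ equal to the distance from $u_j$ to $t$ along the stem; the walk running along the stem then contributes the monomial given by the product of the stem-edge weights up to $t$. For a target node lying on a cycle I first invoke input accessibility, which guarantees a stem, hence a shortest path, reaching the cycle; continuing around the cycle then reaches the cycle vertices at consecutive lengths, so I can assign the cycle's target nodes pairwise distinct column lengths, one per vertex, and read off along each the monomial formed by the feeding path together with the corresponding arc of cycle edges.

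The crux of the proof --- and the step I expect to be the main obstacle --- is to show that the resulting $|\Tc|\times|\Tc|$ minor is not identically zero. For the stem blocks this is immediate: vertex-disjointness of the stems makes each stem monomial a product of edge weights along node-disjoint segments, so these monomials involve disjoint sets of variables and cannot cancel. The delicate part is the cycle blocks, because a walk reaching a cycle vertex may wind around the cycle arbitrarily many times, producing many walks of the same length, and the feeding path supplied by input accessibility may traverse vertices outside the cover. I would control this by selecting, for each cycle, the lengths corresponding to the minimal number of windings, and then isolating the contribution of least total length --- hence of least total degree --- in the determinant expansion; vertex-disjointness of the whole covering family is precisely what guarantees that the leading monomials of the different blocks involve disjoint edge variables and therefore multiply to a single, non-cancelling term in the full determinant.

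Once a non-identically-zero minor is exhibited, generic full row-rank of $\Rc_k$ follows, which is exactly structural target controllability of $\Tc$. As an alternative, the same conclusion can be reached by invoking the characterization of the generic dimension of the controllable subspace in \cite{SH:1980}: under input accessibility that dimension equals the maximal number of state nodes coverable by vertex-disjoint stems and cycles, and one then argues that covering $\Tc$ in particular forces the projection of the controllable subspace onto the $\Tc$-coordinates to be surjective. I would nonetheless prefer the explicit minor construction, since it makes transparent both the role of vertex-disjointness and the fact that input accessibility is needed only to feed the cycles.
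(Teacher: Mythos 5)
The paper does not actually prove this proposition: it is stated as a classical consequence of the characterization of the generic dimension of the controllable subspace in \cite{SH:1980}, so your self-contained, minor-based argument is necessarily a different route, and it must be judged on its own. The strategy (exhibit one $|\Tc|\times|\Tc|$ minor of $\Rc_k$ that is not identically zero) is the right kind of argument, and your walk-counting interpretation of the entries of $\Rc_k$ is correct. The gap is precisely at the step you flag as the crux. Each column of $\Rc_k$ coming from $A^sB$ has entries that are \emph{homogeneous} polynomials of degree $s+1$ in the edge weights, since every walk of length $s+1$ contributes a monomial of degree exactly $s+1$. Hence every permutation term of your chosen minor has the same total degree, namely the sum of the selected column lengths, and the determinant is itself homogeneous: there is no ``contribution of least total length, hence of least total degree'' to isolate, and the leading-term argument collapses exactly where you need it. Spurious walks of the prescribed lengths (alternative paths into a cycle, extra windings compensated by shorter feeding paths, walks that leave and re-enter the cover) contribute monomials of the same degree as the intended ones and could in principle cancel them. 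The standard repair is not a degree argument but a specialization: set to zero the weight of every edge not used by the covering stems, the cycles, and the chosen feeding paths --- legitimate, since a single numerical realization with full row rank suffices for the structural property --- after which all spurious walks vanish and the surviving minor can be computed explicitly. Your proof never performs this step.

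Two secondary issues. First, your column assignment is not shown to be injective across blocks: the lengths assigned to a cycle's vertices (feeding-path length plus consecutive arc lengths) may coincide with lengths already assigned to targets on a stem rooted at the same input, or to another cycle fed from that input, in which case you do not have $|\Tc|$ distinct columns; you only argue distinctness within each cycle. Fixing this requires shifting lengths by extra windings and re-checking collisions, which is genuine combinatorial work (it is exactly the kind of bookkeeping that Lemma~\ref{lem:ctrb_matrix_struct} and Theorem~\ref{thm:target_contr} carry out for Christmas trees). Second, your fallback via \cite{SH:1980} asserts that covering $\Tc$ ``forces the projection of the controllable subspace onto the $\Tc$-coordinates to be surjective''; that implication is the entire content of the proposition and does not follow from the dimension formula alone, so as written it begs the question.
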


A similar result can be derived considering a disjoint union of covering cacti by dropping the assumption on input accessibility. 
Cacti are a class of graphs associated to structured systems that were introduced in the 1970s, and have proven fundamental to develop many graph-theoretic conditions in structural system theory, see, e.g., \cite{CTL:74} for a detailed description of these graphs.

More recently, the following alternative graph-theoretic condition for a set of nodes to be structurally target controllable has been established in \cite{AB:2014}. 

\begin{proposition}[$k$-walk condition]\label{prop:kwalk}
Consider the system \eqref{eq:sys}. Assume that the system is input accessible and has a single input ($m=1$). A set $\Tc\subseteq \Xc$ is structurally target controllable if $\ell_{\text{min}}(t)\ne\ell_{\text{min}}(s)$ for all $t,s\in\Xc$, $t\neq s$.
\end{proposition}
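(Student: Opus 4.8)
The plan is to work directly with the $k$-steps output controllability matrix $\Rc_k$ and to exhibit a square submatrix whose determinant is a nonzero polynomial in the generic entries of $A$ and $B$; by genericity this forces $\Rc_k$ to have full row rank for almost every realization, which is exactly structural target controllability. Since $m=1$, the matrix $B$ is a single column, and the rows of $\Rc_k$ are indexed by the target set $\Tc$ (the output matrix $C$ merely extracts the coordinates indexed by $\Tc$). The entry of $\Rc_k$ in the row of $t\in\Tc$ and column $j\in\{0,\dots,k-1\}$ is $[A^{j}B]_t$, which, expanding the matrix product along the edges of $\Gc$, equals the weighted sum over all walks of length $j+1$ from the input node $u$ to $t$.

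The key observation is a staircase structure. Because the shortest walk from $u$ to $t$ has length $\ell_{\text{min}}(t)$, the entry $[A^{j}B]_t$ vanishes identically whenever $j+1<\ell_{\text{min}}(t)$, while for $j+1=\ell_{\text{min}}(t)$ it is the weighted sum over the shortest paths from $u$ to $t$. I would then order the target nodes $t_1,\dots,t_q$, with $q=|\Tc|$, so that $\ell_{\text{min}}(t_1)<\cdots<\ell_{\text{min}}(t_q)$, which is possible precisely because the hypothesis forces the values $\ell_{\text{min}}$ to be pairwise distinct on $\Xc$, and in particular on $\Tc$. Selecting the columns $j_i=\ell_{\text{min}}(t_i)-1$, each of which lies in the admissible range since $\ell_{\text{min}}(t_i)\le n\le k$, the resulting $q\times q$ submatrix $M$ with $[M]_{i,i'}=[A^{j_{i'}}B]_{t_i}$ is upper triangular: for $i'<i$ one has $j_{i'}+1=\ell_{\text{min}}(t_{i'})<\ell_{\text{min}}(t_i)$, so no walk of that length reaches $t_i$ and the entry is identically zero.

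It remains to show that $M$ is generically nonsingular, i.e., that $\det M$ is not the zero polynomial. As $M$ is triangular, $\det M$ is the product of the diagonal entries $[M]_{i,i}=[A^{\ell_{\text{min}}(t_i)-1}B]_{t_i}$, each the weighted sum over the shortest paths from $u$ to $t_i$. Here I would use that any walk of minimal length is in fact a simple path, since repeating a node would permit excising a cycle and yield a strictly shorter walk; hence each such term is a product of distinct edge weights, and distinct shortest paths contribute distinct monomials in the algebraically independent generic parameters. Input accessibility guarantees at least one such path, so each diagonal entry, and therefore their product, is a nonzero polynomial. A nonzero polynomial is nonzero on a dense set of parameter values, so some realization of $A$ and $B$ makes $\det M\ne0$; then $\Rc_k$ has a nonsingular $q\times q$ minor and thus full row rank, establishing structural target controllability.

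I expect the main obstacle to be the non-cancellation argument for the diagonal entries: one must verify that the weighted sum over shortest paths cannot vanish identically, which rests on the monomials of distinct simple paths being distinct (no two shortest paths sharing the same set of edges). The triangular structure is the clean, load-bearing idea, while the remaining steps are a standard genericity argument.
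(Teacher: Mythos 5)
Your proof is correct; the paper itself states Proposition~\ref{prop:kwalk} without proof (citing \cite{AB:2014}), but your argument---the triangular staircase structure of the columns of $\Rc_k$ selected at the distinct shortest-path lengths, with diagonal entries that are sums of distinct squarefree monomials over shortest paths and hence generically nonzero---is the standard one and coincides with the technique the paper uses in Lemma~\ref{lem:ctrb_matrix_struct} and Theorem~\ref{thm:target_contr}, which reduce to exactly this argument when no node is backward-connected to a cycle. Note only that you invoke the hypothesis just for pairs in $\Tc$ rather than all of $\Xc$ as literally stated, so you in fact establish the slightly stronger (and standard) form of the condition.
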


Although applicable to any network system, this result provides a complete characterization of the sets of target controllable nodes only for directed tree networks \cite{AB:2014}.

In \cite{moothedath2019target}, a necessary graphical condition for a set of nodes to be structurally target controllable is presented. This condition is based on the existence of a perfect matching\footnote{
Given a bipartite graph $\Gc=(\Vc\cup \Wc,\Ec)$, $\Vc\cap \Wc=\emptyset$ and $\Ec=\Vc\times \Wc$, a matching is a collection of edges $\Mc\subseteq \Ec$ such that no two edges in $\Mc$ share a common endpoint. For $|\Vc|\leq |\Wc|$, $|\Vc|\geq |\Wc|$, respectively,  the matching is said to be perfect if $|\Mc|=|\Wc|, \ |\Mc|=|\Vc|$.}
in a bipartite graph $\Gc_B=(\Tc \cup \Lc, \Ec_B)$, where $\Tc$ contains the target nodes and $\Lc$ the lengths of paths from input nodes to target nodes, up to a maximum length equal to the generic dimension of the controllability subspace. For any $t\in\Tc$ and $\ell\in\Lc$, we have an edge $(t,\ell)\in\Ec_B$ if there exists a path of length $\ell$ from an input node to $t$.
The condition is conjectured to be sufficient as well, and an algorithm is proposed to verify target controllability based on this assumption. However, as demonstrated in the following example, this condition~is~not~sufficient.

\begin{example}[Counterexample to conjecture in \cite{moothedath2019target}]\label{ex:counter_example}
  We consider the system $\Sigma$ described by the following state, input, and output matrices:
  \begin{align*}
    A=\left[\begin{smallmatrix}
    \setlength{\tabcolsep}{0pt}
0 & a_{12} & 0 & 0 & 0\\
a_{21}  & 0 & 0 & 0 & 0\\
0 & 0 & 0 & 0 & a_{35} \\
0 & 0 & a_{43}  & 0 & 0\\
0 & 0 & a_{53}  & 0 & 0
\end{smallmatrix}\right], \ 
B=\left[\begin{smallmatrix}
    b_1 \\ 0 \\ b_3 \\ 0 \\ 0 
\end{smallmatrix}\right], \ 
C=\left[\begin{smallmatrix}
    1 &  0 & 0  &0 &0\\ 
    0 &  0 & 1 & 0 & 0\\
    0 &  0 & 0 &1 &0\\ 
    0 &  0 & 0 & 0 & 1 
\end{smallmatrix}\right]
\end{align*}
The target set is $\Tc=\{1,3,4,5\}.$ Fig.~\ref{fig:counter_example}(a) shows the graph $\mathcal{G}$ associated with $\Sigma$ while Fig.~\ref{fig:counter_example}(b) the bipartite graph introduced in \cite{moothedath2019target}. The latter graph presents a perfect matching. Thus, according to the algorithm proposed in \cite{moothedath2019target}, $\Tc$ is structurally target controllable. 
However, by computing the output controllability matrix of $\Sigma$ for $k=n=5$,
\begin{align*}
     \Rc_n =\Bigg[\begin{smallmatrix}
     b_1 & 0 &  a_{12} \, a_{21} \, b_1& 0 &  a_{12}^2 \, a_{21}^2 \,b_1\\
     b_3 & 0 & a_{35} \,a_{53} \,b_3  & 0 & \,{a_{35} }^2 \,{a_{53} }^2 \,b_3 \\
     0 &  a_{43} \,b_3  & 0 & a_{43} \, a_{35} \,a_{53} \,b_3  & 0\\
     0 & a_{53} \,b_3  & 0 & a_{35} \,{a_{53} }^2 \,b_3  & 0 
   \end{smallmatrix}\Bigg],
 \end{align*}
it is possible to recognize that $[\Rc_n]_{3,:}= a_{43} [\Rc_n]_{4,:}/({ a_{53}})$.
Therefore, $\Rc_n$ is not full row-rank for any choice of the generic parameters of the system. This implies that the set $\Tc$ is not structurally target controllable.\exampleend
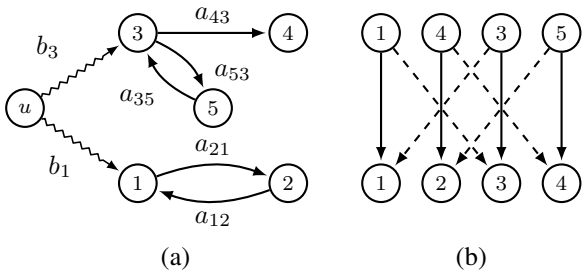
\begin{figure}
    \begin{tikzpicture}[shorten >=1pt, auto, ultra thick,
   node_style/.style={draw, circle,thick, fill=white, minimum size=0.5cm,font=\footnotesize,inner sep=0cm},every edge/.append style = {thick}]
        
        \node[node_style] (u) at (-0.5,0) {$u$};
        \node[node_style] (1) at (1,-1) {$1$};
        \node[node_style] (2) at (3,-1) {$2$};
        \node[node_style] (3) at (1,1) {$3$};
        \node[node_style] (4) at (3,1) {$4$};
        \node[node_style] (5) at (2,0) {$5$};

        \draw[-latex,semithick,line join=round,
             decorate, decoration={
                zigzag,
                segment length=4,
                amplitude=.9,
                post=lineto,
                post length=7.5pt
            }] (u) -- (1) node [midway,label=below left:{$b_1$}] {};

            \draw[-latex,semithick,line join=round,
             decorate, decoration={
                zigzag,
                segment length=4,
                amplitude=.9,
                post=lineto,
                post length=7.5pt
            }] (u) -- (3) node [midway,pos=0.2,label=:{$b_3$}] {};
            
        \draw[-latex] (1) edge[above,bend left, out=20, in=160] node{$a_{21}$} (2);
        \draw[-latex] (2) edge[below, bend left, in=160, out=20] node{$a_{12}$} (1);
        \draw[-latex] (3) edge node{$a_{43}$} (4);
        
        \draw[-latex] (3) edge[bend left, above,at end,anchor=south west,  out=20, in=160 ] node{$a_{53}$} (5);
        \draw[-latex] (5) edge[bend left, out=20, in=160] node{$a_{35}$} (3);
        \node at (1.5,-2) 
    {(a)};
    \begin{scope}[xshift=120px]
    
        \node[node_style] (n1) at (0,1) {$1$};
        \node[node_style] (n4) at (0.8,1) {$4$};
        \node[node_style] (n3) at (1.6,1) {$3$};
        \node[node_style] (n5) at (2.4,1) {$5$};
        
        \node[node_style] (1) at (0,-1) {$1$};
        \node[node_style] (2) at (0.8,-1) {$2$};
        \node[node_style] (3) at (1.6,-1) {$3$};
        \node[node_style] (4) at (2.4,-1) {$4$};

        \draw[-latex] (n4) edge node{} (2);
        \draw[-latex,dashed] (n4) edge node{} (4);
        \draw[-latex,dashed] (n5) edge node{} (2);
        \draw[-latex] (n5) edge node{} (4);
        \draw[-latex] (n1) edge node{} (1);
        \draw[-latex, dashed] (n3) edge node{} (1);
        \draw[-latex, dashed] (n1) edge node{} (3);
        \draw[-latex] (n3) edge node{} (3);
        
        \node at (1.2,-2) {(b)};
    \end{scope}
\end{tikzpicture}
    \vspace{-0.5cm}
    \caption{Counterexample to the sufficiency of condition in \cite{moothedath2019target}. Fig.~\ref{fig:counter_example}(a) shows the graph describing a structured system which is not structurally target controllable with target set $\Tc=\{1,3,4,5\}$. The generic dimension of the controllability subspace is $4$. In Fig.~\ref{fig:counter_example}(b) it is depicted the bipartite graph introduced in the aforementioned work to check structural target controllability of $\Sigma$. The upper nodes corresponds to $\Tc$, the bottom nodes are associated to path lengths $\{1,2,3,4\}$. The graph presents a perfect matching. The edges of the perfect matching are solid, the edges not considered in the matching are dashed.}
    \label{fig:counter_example}
\end{figure}
\end{example}

A necessary and sufficient graph-theoretic condition for structural target controllability has appeared in \cite{AM-CD-MA:2023}. However, this condition applies only to a specific class of structured systems, where either (i) there exists a numerical realization of $A$ (with non-zero generic weights) that is diagonalizable, or (ii) each diagonal entry of $A$ is generic. These assumptions limit the applicability of the result: for instance, it cannot be applied to systems described by directed acyclic graphs, since the adjacency matrix of such graphs is a nilpotent matrix \cite[Sec.~6.4.1]{newman2018networks}. 

Finally, another necessary and sufficient graph-theoretic condition for structural target controllability has been established in \cite{LJ-CX-PS:21}. This condition applies to symmetric state matrices $A$, an assumption we do not make in this paper.

\section{Christmas trees}\label{sec:christmas}
To establish alternative, more general graph-theoretic conditions for structural target controllability, we introduce a new class of structured systems. These systems are described by graphs that are neither cacti nor directed trees, but instead exhibit a ``hybrid'' topology between the two.
\begin{definition}[Christmas tree]\label{def:xmas}
    A Christmas tree is a single-input, input-accessible system \eqref{eq:sys} whose graph $\Gc$ satisfies the following conditions:
    \begin{enumerate}[C1)]
        \item each state node of $\Gc$ belongs to at most one cycle;
        \item for each cycle of $\Gc$, exactly one node has in-degree 2;
        \item all other state nodes of $\Gc$, whether in cycles or not, have in-degree 1. 
    \end{enumerate}\smallskip
\end{definition}

In intuitive terms, a Christmas tree is described by a graph $\Gc$ that combines cycles (Christmas baubles) and tree-like branches. 
Within each cycle, only one node has two incoming edges. This node serves as a ``junction'' that connects the cycle to the rest of the graph. Observe that if the graph $\Gc$ describing a Christmas tree contains no cycles then it coincides with a directed tree. Additionally, if $\Gc$ can be covered by a disjoint union of stems and cycles then it corresponds to a cactus. An example of graph describing a Christmas tree network is depicted in Fig.~\ref{fig:ct}.

\begin{figure}
    \centering
        \begin{tikzpicture}[shorten >=1pt, auto, ultra thick,
   node_style/.style={draw, circle,thick, fill=white, minimum size=0.5cm,font=\footnotesize,inner sep=0cm},every edge/.append style = {thick}]
   
  \node[node_style] (u) at (1.75,1.5) {$u$};
  \node[node_style] (1) at (1,0) {$1$};
  \node[node_style] (2) at (2.5,0) {$2$};
  \node[node_style] (3) at (1-0.2,-1.5) {$3$};
  \node[node_style] (4) at (-0.5-0.2,-1.5) {$4$};
  \node[node_style] (11) at (1.75,-1.25) {$11$};

\node[node_style] (5) at (2.5+0.2,-1.5) {$5$};
  \node[node_style] (6) at (4+0.2,-1.5) {$6$};

\node[node_style] (7) at (-0.5-0.8,-3) {$7$};
\node[node_style] (8) at (1,-3) {$8$};
\node[node_style] (9) at (2.5,-3) {$9$};
\node[node_style] (10) at (4+0.4,-3) {$10$};
\node[node_style] (12) at (-0,-3) {$12$};

\node[node_style] (13) at (-0.5,-0.5) {$13$};

\node[node_style] (15) at (4,-0.5) {15};
\node[node_style] (14) at (-1.8,-1.5) {14};
\node[node_style] (16) at (5.2,-1.5) {16};

  \draw[-latex] (1) edge node{} (2);
  \draw[-latex] (2) edge node{} (11);
  \draw[-latex] (11) edge node{} (1);
  
  \draw[-latex] (3) edge [bend left] node{} (4);
  \draw[-latex] (4) edge [bend left]  node{} (3);
  \draw[-latex] (4) edge node{} (12);
  
   \draw[-latex] (5) edge [bend left] node{} (6);
  \draw[-latex] (6) edge [bend left]  node{} (5);
  
  \draw[-latex] (1) edge node{} (13);
  \draw[-latex] (13) edge node{} (14);

  \draw[-latex] (1) edge node{} (3);
  \draw[-latex] (2) edge node{} (5);
  \draw[-latex] (3) edge node{} (8);
  \draw[-latex] (4) edge node{} (7);
  \draw[-latex] (5) edge node{} (9);
  \draw[-latex] (6) edge node{} (10);
  \draw[-latex] (10) edge [loop,looseness=5,in=180, out=120] node{} (10);
  \draw[-latex] (2) edge node{} (15);
  \draw[-latex] (15) edge node{} (16);

   \draw[-latex,semithick,line join=round,
             decorate, decoration={
                zigzag,
                segment length=4,
                amplitude=.9,
                post=lineto,
                post length=7.5pt
            }] (u) -- (1);
    
\end{tikzpicture}
    \vspace{-0.5cm}
    \caption{Example of graph $\Gc=(\Wc,\Ec)$ of a Christmas tree. }
    \label{fig:ct}
\end{figure}
 
 \begin{definition}
 Consider a directed graph $\Gc = (\Vc,\Ec)$. We say that a node $v\in \mathcal{V}$ is backward-connected to a cycle $c$ if either $v\in c$ or there exists a path $p=\{w_1,\dots,w_s\}$ in $\Gc$ such that $w_1\in c$ and $w_s= v$.
 \end{definition}

In the remainder of the paper, we focus on a specific subclass of Christmas trees that satisfies two additional properties.

\begin{definition}[Elementary Christmas tree]
    An elementary Christmas tree is a Christmas tree whose graph $\Gc$ satisfies the following conditions:
    \begin{enumerate}[C4)]
    \item[C4)] all the cycles of $\Gc$ have the same length $r\geq 1$;
    \item[C5)] each state node of $\Gc$ is backward-connected to at most one cycle.
\end{enumerate}\smallskip
\end{definition}

Given an elementary Christmas tree, we define its $i$-th periodic class as\footnote{If no cycles are present, we define a single class $\Xc_0 = \Xc$.}
$$
\Xc_i = \{x\in\Xc \,:\, \ell_{\text{min}}(x) \ \mathrm{mod} \ r=i \}, \ \ i=0,\dots,r-1.
$$
In view of conditions C4)-C5), the following properties hold for an elementary Christmas tree:
\begin{enumerate}[P1)]
    \item $\Xc=\Xc_0\cup\cdots \cup \Xc_{r-1}$, ${\Xc}_i\cap {\Xc}_j=\emptyset$ and $L(\Xc_i)\cap L(\Xc_j)=\emptyset$ for all $i,j$, $i\neq j$; \label{prop:P1}
    \item for each node $x\in \Xc$ and each $\ell\in L(x)$, there is a unique path of length $\ell$ connecting the input node with node $x$.\label{prop:P2}
\end{enumerate}

We define the shortest distance in $\Xc_i$ from the input to $x\in \Xc_i$ as\footnote{If no cycles are present, we define $\ell_{\text{min}}^{(i)}(x)= \ell_{\text{min}}(x)$.}
$$
\ell_{\text{min}}^{(i)}(x) = \lceil\ell_{\text{min}}(x)/r\rceil.
$$
Note that $\ell_{\text{min}}^{(i)}(x)$ equals the length of a shortest path from $u$ to $x$ when only nodes in $\Xc_i$ are counted.

Finally, if a node $w \in \Xc_i$ is backward-connected to a cycle, with a slight abuse of nomenclature, we say that $v \in \Xc_i$ is a child of $w$ if:
\begin{enumerate}
    \item $\ell_{\text{min}}(v) = \ell_{\text{min}}(w) + r$,
    \item $v$ is backward-connected to the same cycle of $w$.
\end{enumerate}
If $w$ is part of a path, 
a child of $w$ is the first node in the path after $w$ that belongs to the same periodic class as $w$. However, notice that, in general, not all children are forced to belong to a path containing $w$.

\section{Target controllability of elementary Christmas trees}\label{sec:target-christmas}

In this section, we establish a graphical characterization of the set of structurally target controllable nodes in elementary Christmas trees. Our main result relies on the following instrumental lemma, which sheds light on the structure of the output controllability matrix of an~elementary~Christmas~tree.

\begin{lemma}{\em (Output controllability matrix of elementary Christmas trees)}\label{lem:ctrb_matrix_struct}
Consider an elementary Christmas tree and let $\Rc_k$ denote its output controllability matrix with output matrix $C$ identified by the target node set $\Tc=\Tc_0\cup\cdots\cup \Tc_{r-1}$, $\Tc_i=\{t_{1}^{(i)},\dots,t_{s_i}^{(i)}\}\subseteq \Xc_i .$ Then, $\forall k\geq n$,
there exist permutation matrices $P_1$, $P_2$ such that
\begin{align}\label{eq:hatRo}
        \hat{\Rc}_k=P_1\Rc_k P_2=\begin{bmatrix}
        D^{(0)}\Rc_{k}^{(0)}  & \dots & 0 & 0 \\
        \vdots & \ddots & \vdots & \vdots
        \\
            0 & \dots & D^{(r-1)}\Rc_{k}^{(r-1)} & 0
        \end{bmatrix},
\end{align}
where $D^{(i)}\in\Rb^{s_i\times s_i}$ are diagonal matrices of the form
\begin{align}\label{eq:Di}
D^{(i)}=\textrm{diag}\{\beta_1^{(i)},\dots,\beta_{s_i}^{(i)}\},
\end{align}
with $\beta_j^{(i)}$ being the weight of the shortest path in $\Gc$ connecting the input node to the state node $t^{(i)}_j$, 
and the rows of the block $\Rc_{k}^{(i)}$ are of the following type: 
\begin{enumerate}[T1)]
    \item  if $t^{(i)}_j$ is not backward-connected to cycles,
    $$[\Rc_{k}^{(i)}]_{j,:}=[0\ \cdots \ 0\  1  \ 0 \ \cdots \ 0];$$
    \item if $t^{(i)}_j$ is backward-connected to a cycle with weight $\alpha$,
    \begin{align*}
        [\Rc_{k}^{(i)}]_{j,:}=[0\ \cdots \ 0\  1  \ \alpha \ \alpha^2 \ \alpha^3\ \cdots].
    \end{align*}
\end{enumerate}
In both cases T1, T2 the first non-zero entry appears in position $\ell_{\text{min}}^{(i)}(t^{(i)}_j)$.

\begin{proof}
First, since all sets $\Tc_i$ are disjoint, we can reorder the rows of $\Rc_k$ using a permutation matrix $P_1$ to group the rows corresponding to the nodes in $\Tc_0$ first, followed by those corresponding to $\Tc_1$, and so on, up to $\Tc_{r-1}$. 

Next, note that the entries of $\Rc_k$ admits the following graph-theoretic interpretation: in the row of $\Rc_k$ corresponding to node $t \in \Tc$, the $i$-th entry is the sum of the weights of all paths of length $i$ connecting the input node $u$ to $t$.\footnote{This follows from the fact that $[A^s]_{ij}$ equals the sum of the weights of paths of length $k$ from node $j$ to node $i$. } This observation, combined with property P\ref{prop:P1}, implies that each column of $\Rc_k$ can have non-zero entries only in rows corresponding to nodes from at most one set $\Tc_i$, where the latter set may vary from column to column. Therefore, we can reorder the columns of $P_1 \Rc_k$ using a permutation matrix $P_2$ to obtain a matrix with the block structure as in \eqref{eq:hatRo}. 
Note that the zero columns of $\hat{\Rc}_k$ are associated with lengths of paths from $u$ to nodes in $\Xc \setminus \Tc$.

Let $\hat{\Rc}_{k}^{(i)}$ denote the $i$-th diagonal block of $\hat{\Rc}_k$, whose rows correspond to nodes of $\Tc_i$. By property P\ref{prop:P2} and the aforementioned graphical interpretation of the entries of $\Rc_k$,
\begin{enumerate}[i)]
\item if $t^{(i)}_j\in\Tc_i$ is not backward-connected to cycles there is a single path connecting $u$ to $t^{(i)}_j$. This path has length $\ell_{\text{min}}(t^{(i)}_j)\leq n$ and weight $\beta_j^{(i)}$. Since $k\geq n$, the row of $\hat{\Rc}_{k}^{(i)}$ corresponding to this node is different from the zero vector: it has $\beta_j^{(i)}$ in position $\ell_{\text{min}}^{(i)}(t^{(i)}_j)$ and all the~other~entries~$0$.
\item if $t^{(i)}_j \in \Tc_i$ is backward-connected to a cycle with weight $\alpha$, there are multiple paths connecting $u$ to $t^{(i)}_j$. Specifically, there is a unique path of length $\ell_{\text{min}}(t^{(i)}_j) + s r$ and weight $\beta_j^{(i)} \alpha^s$ for all $s = 0, 1, 2, \dots$. The row of $\hat{\Rc}_{k}^{(i)}$ corresponding to this node has $\beta_j^{(i)}\alpha^s$ in position $\ell_{\text{min}}^{(i)}(t^{(i)}_j)+s$ and all the other entries $0$. Since $k\geq n$ and $\ell_{\text{min}}(t^{(i)}_j)\leq n$, the row of $\hat{\Rc}_{k}^{(i)}$ corresponding to $t^{(i)}_j$ is always different from the zero vector.
\end{enumerate}

Based on the above observations, it follows that $\hat{\Rc}_{k}^{(i)}=D^{(i)}\Rc_{k}^{(i)}$, where $D^{(i)}$ is defined in \eqref{eq:Di} and the rows of $\Rc_{k}^{(i)}$ have the form T1 or T2. 
\end{proof}
\end{lemma}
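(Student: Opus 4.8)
The plan is to combine the standard combinatorial meaning of the output controllability matrix with the structural properties P\ref{prop:P1})--P\ref{prop:P2}). First I would record the graph-theoretic reading of $\Rc_k$: because the system has a single input and $C$ selects canonical rows indexed by $\Tc$, the row of $\Rc_k$ attached to a target $t$ is exactly the $t$-th row of $[B\ AB\ \cdots\ A^{k-1}B]$, and its entry in the column associated with a given length equals the sum of the weights of all directed paths of that length from $u$ to $t$ (using that $[A^s]_{tv}$ is the weighted count of length-$s$ paths $v\to t$, together with $B_v\neq 0 \iff u\to v$). This identity is the engine for everything below.

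\textbf{Row and column grouping.} Since the $\Tc_i$ partition $\Tc$, I would take $P_1$ to list first the rows of $\Tc_0$, then $\Tc_1$, up to $\Tc_{r-1}$. For the columns I would invoke P\ref{prop:P1}): because $L(\Xc_i)\cap L(\Xc_j)=\emptyset$ for $i\neq j$, any fixed column --- i.e. any fixed path length $\ell$ --- can be non-zero only in rows whose target lies in the single class $\Xc_i$ with $\ell\in L(\Xc_i)$, if any. Thus, after the row grouping, every column's support sits inside one row-block, and reordering the columns by the class they serve via $P_2$ yields the block-diagonal form; the columns whose lengths reach only nodes of $\Xc\setminus\Tc$ collect into the trailing zero block.

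\textbf{Row shapes and the diagonal factor.} Within block $i$ I would analyze a single target $t_j^{(i)}$ using P\ref{prop:P2}) to remove the summation: for each admissible length there is a unique $u\to t_j^{(i)}$ path, so each entry is a single monomial weight. If $t_j^{(i)}$ is not backward-connected to a cycle, input accessibility yields a single path of length $\ell_{\text{min}}(t_j^{(i)})\le n$ and weight $\beta_j^{(i)}$; since $k\ge n$ this entry is present, giving shape T1) after factoring $\beta_j^{(i)}$. If $t_j^{(i)}$ is backward-connected to a cycle of weight $\alpha$, I would argue that lengthening a path can only be achieved by winding around that cycle, so the admissible lengths are $\ell_{\text{min}}(t_j^{(i)})+sr$ with weights $\beta_j^{(i)}\alpha^s$, producing the geometric row of shape T2). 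Collecting the $\beta_j^{(i)}$ into $D^{(i)}=\diag\{\beta_1^{(i)},\dots,\beta_{s_i}^{(i)}\}$ then writes block $i$ as $D^{(i)}\Rc_{k}^{(i)}$. For the starting column, C4) forces every length in $L(\Xc_i)$ to be $\equiv i \pmod r$ and spaced by $r$, so the column-compression sends length $\ell$ to index $\lceil \ell/r\rceil$ and the first non-zero lands at $\ell_{\text{min}}^{(i)}(t_j^{(i)})=\lceil \ell_{\text{min}}(t_j^{(i)})/r\rceil$, as claimed.

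I expect the main obstacle to be the cyclic case of the row analysis: rigorously proving that a path to a node backward-connected to a cycle can be lengthened only by repeated laps of a single cycle, each contributing exactly $(r,\alpha)$ to $(\text{length},\text{weight})$. This is where C1)--C3) (in-degree one away from the unique junction of each cycle) together with C5) (backward-connection to at most one cycle) must be used to exclude alternative path multiplicities and to guarantee $\alpha$ is well defined. Once this is pinned down, the monomial structure, the geometric pattern, and the $\lceil\cdot/r\rceil$ indexing follow by routine bookkeeping.
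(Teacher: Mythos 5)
Your proposal is correct and follows essentially the same route as the paper's proof: the graph-theoretic reading of the entries of $\Rc_k$, the row grouping by periodic class via $P_1$, the column grouping via property P\ref{prop:P1}) giving $P_2$ and the block-diagonal form, and the row-shape analysis via property P\ref{prop:P2}) splitting into the acyclic (T1) and cyclic (T2) cases with the $\beta_j^{(i)}$ factored into $D^{(i)}$. The ``obstacle'' you flag at the end (that lengthening a path can only occur by winding a single cycle) is exactly what the paper absorbs into property P\ref{prop:P2}), which it asserts as a consequence of C4)--C5) without further proof, so your treatment is if anything slightly more explicit about where the structural conditions enter.
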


\begin{example}
To illustrate the decomposition of $\Rc_k$ in Lemma \ref{lem:ctrb_matrix_struct}, we consider the elementary Christmas tree of Example \ref{ex:counter_example}. 
In this case, $r=2$ and the periodic~classes~are
$$
\Xc_0 = \{1,3\}, \ \ \Xc_1 = \{2,4,5\}. 
$$
The set of target nodes is
$$
\Tc= \Tc_0\cup\Tc_1, \ \ \Tc_0=\{4,5\},\ \Tc_1=\{1,3\}.
$$
The weights of the shortest paths from $u$ to node $4,5,1,3$ are $\beta^{(0)}_1= b_3a_{43}$, $\beta^{(0)}_2= b_3 a_{53}$, $\beta^{(1)}_1=b_1$, $\beta^{(1)}_2=b_3$, respectively. In addition, we let $\alpha_1=a_{12} a_{21}$, $\alpha_2=a_{35} a_{53}$ denote the weights of the cycles $\{1,2,1\}$, $\{3,5,3\}$, respectively.
    
It holds
\begin{align*}
        \hat{\Rc}_n\!=\!\underbrace{\left[\begin{smallmatrix}
            0 & 0 & 1 & 0 \\
            0 & 0 & 0 & 1 \\
            1 & 0 & 0 & 0 \\
            0 & 1 & 0 & 0 
        \end{smallmatrix}\right]}_{P_1}
            \Rc_n
            \underbrace{\left[\begin{smallmatrix}
            0 & 0 & 1 & 0 & 0 \\
            1 & 0 & 0 & 0 & 0 \\
            0 & 0 & 0 & 1 & 0 \\
            0 & 1 & 0 & 0 & 0 \\
            0 & 0 & 0 & 0 & 1
        \end{smallmatrix}\right]}_{P_2} =  \left[\begin{smallmatrix}
                D^{(0)} {\Rc}_n^{(0)}& 0\\
            0 & D^{(1)} \Rc_n^{(1)}
            \end{smallmatrix}\right]
    \end{align*}
where $D^{(0)}=\diag\{\beta^{(0)}_1,\beta^{(0)}_2\}$, $D^{(1)}=\diag\{\beta^{(1)}_1,\beta^{(1)}_2\}$, 
\begin{align*}
\Rc_n^{(0)}= \left[\begin{smallmatrix}
                1 & \alpha_2\\
                1 & \alpha_2\\
            \end{smallmatrix}\right],\ 
\Rc_n^{(1)} = \left[\begin{smallmatrix}
                1 & \alpha_1 &\alpha_1^2 \\
                1 & \alpha_2 & \alpha_2^2\\
            \end{smallmatrix}\right].
\end{align*}
\exampleend 
\end{example}

Given a set of nodes $\Tc_i \subseteq \Xc_i$, we next introduce a partition of $\Tc_i$ that will be useful in formulating a graphical condition for structural target controllability. Specifically, we let
\begin{align}\label{eq:partition-Ti}
\Tc_i = \Tc_{i}^a \cup \Tc_{i}^c,
\end{align}
where:
\begin{enumerate}
\item $\Tc_{i}^a$ consists of nodes in $\Tc_i$ that either:
\begin{enumerate}
\item are not backward-connected to cycles, or
\item have at least one child in $\Tc_i$;
\end{enumerate}
\item $\Tc_{i}^c$ contains the remaining nodes in $\Tc_i$, i.e., $\Tc_{i}^c = \Tc_i \setminus \Tc_{i}^a$. All these nodes are  backward-connected to cycles.
\end{enumerate}

The following result provides a graphical criterion to determine if a set of nodes is structurally target controllable.\smallskip

\begin{theorem}{\em (Target controllability of elementary Christmas trees)}\label{thm:target_contr}
Consider an elementary Christmas tree and a set of nodes $\Tc = \Tc_0 \cup \cdots \cup \Tc_{r-1}$, where $\Tc_i \subseteq \Xc_i$. Partition each $\Tc_i$ according to \eqref{eq:partition-Ti}. The set $\Tc$ is structurally target controllable if, for all $i = 0, \dots, r-1$, the following conditions hold:
\begin{enumerate}
\item $\ell_{\text{min}}(t)\ne \ell_{\text{min}}(s)$ for all $t,s\in\Tc_{i}^a$ with $t\ne s$;
\item the nodes in $\Tc_{i}^c$ are backward-connected to different cycles.
\end{enumerate}\smallskip
\end{theorem}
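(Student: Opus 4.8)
The plan is to reduce the statement, via Lemma~\ref{lem:ctrb_matrix_struct}, to a purely linear-algebraic claim about each diagonal block $\Rc_k^{(i)}$, and then to prove full row rank of that block through a row reduction driven by the child relation, followed by a partial-fraction argument for linear independence.

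First I would reduce the problem to a single periodic class. By Lemma~\ref{lem:ctrb_matrix_struct} there are permutations bringing $\Rc_k$ to the block-diagonal form $\hat{\Rc}_k$ (the trailing identically-zero columns being irrelevant to the row rank). Since each $D^{(i)}=\diag\{\beta_1^{(i)},\dots,\beta_{s_i}^{(i)}\}$ has generically nonzero diagonal entries, it is invertible, so $\hat{\Rc}_k$ has full row rank if and only if every block $\Rc_k^{(i)}$ has full row rank $s_i$. Moreover, by the Cayley--Hamilton theorem each column $CA^{j}B$ with $j\ge n$ is a fixed linear combination of $CB,\dots,CA^{n-1}B$, so $\rank\Rc_k$ is constant for all $k\ge n$ and equals the rank of the semi-infinite matrix obtained as $k\to\infty$. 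It therefore suffices to prove that, for generic parameters, the semi-infinite version of each $\Rc_k^{(i)}$ has $s_i$ linearly independent rows; I would work with the infinite rows of types T1, T2, where the geometric cancellations are exact.

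Next I would perform a row reduction keyed to the partition \eqref{eq:partition-Ti}. For a node $w\in\Tc_i^a$ of case (b) --- backward-connected to a cycle of weight $\alpha_w$ and possessing a child $v\in\Tc_i$ --- the child definition together with T2 shows that the row of $v$ is exactly the row of $w$ shifted one column to the right and carrying the same ratio $\alpha_w$; hence (row of $w$) $-\,\alpha_w\,$(row of $v$) equals the indicator vector $e_{p_w}$ with $p_w=\ell_{\text{min}}^{(i)}(w)$. Collecting these operations as left multiplication by $M=I-N$, where $N$ carries the single entry $\alpha_w$ in position $(w,v)$ for each case-(b) node, the key observation is that $N$ is strictly upper triangular once the rows are ordered by increasing $\ell_{\text{min}}^{(i)}$ (a child has strictly larger $\ell_{\text{min}}^{(i)}$), so $N$ is nilpotent, $M$ is invertible, and the row rank is unchanged. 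After this reduction the rows split into: the rows of $\Tc_i^a$, each a standard basis vector $e_{p_w}$ with the $p_w$ pairwise distinct by condition~1 (distinct $\ell_{\text{min}}$ within a class forces distinct $\ell_{\text{min}}^{(i)}$); and the untouched rows of $\Tc_i^c$, each a geometric row $[0\cdots0\ 1\ \alpha\ \alpha^2\cdots]$ whose ratios are pairwise distinct, since by condition~2 these nodes are backward-connected to different cycles, which by C1 are vertex-disjoint and hence carry products of disjoint generic edge weights, algebraically independent.

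Finally I would establish independence of the reduced family. Suppose $\sum_{w\in\Tc_i^a}c_w e_{p_w}+\sum_{w\in\Tc_i^c}d_w R_w=0$. Deleting the finitely many columns indexed by the distinct positions $\{p_w\}$ kills the first sum, so $\sum_{w\in\Tc_i^c}d_w R_w$ is supported on a finite column set. Encoding each geometric row as the rational function $R_w\leftrightarrow z^{q_w}/(1-\alpha_w z)$ with $q_w=\ell_{\text{min}}^{(i)}(w)$, finite support means $\sum_w d_w z^{q_w}/(1-\alpha_w z)$ is a polynomial; but the poles $1/\alpha_w$ are pairwise distinct, so a nonzero $d_{w_0}$ would leave a genuine pole at $1/\alpha_{w_0}$ with nonzero residue, contradicting polynomiality. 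Hence all $d_w=0$, and then $\sum_w c_w e_{p_w}=0$ forces all $c_w=0$. Thus each block has full row rank for generic parameters, and the system is structurally target controllable. I expect the main obstacle to be exactly this independence claim: the T2 rows start at different columns and must remain independent both after deleting the $\Tc_i^a$ columns and after truncation to $k=n$; I resolve both at once by passing to semi-infinite rows (legitimate via Cayley--Hamilton) and reading independence off the distinct poles, while the only other delicate point --- that the child-induced operations are acyclic so that $M$ is invertible --- is handled by the $\ell_{\text{min}}^{(i)}$-ordering.
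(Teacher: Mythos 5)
Your proposal is correct and follows essentially the same route as the paper: reduce to the diagonal blocks $\Rc_k^{(i)}$ via Lemma~\ref{lem:ctrb_matrix_struct}, turn each row of a $\Tc_i^a$ node with a child into a canonical vector by subtracting $\alpha$ times the child's row (distinct positions by condition 1), and then use distinctness of the cycle weights from condition 2 to get independence of the remaining geometric rows of $\Tc_i^c$. The only differences are presentational: the paper performs the elementary row operations sequentially, ordered by distance from the input, rather than packaging them as a single invertible $I-N$ with $N$ nilpotent, and it concludes by exhibiting an invertible Vandermonde submatrix of the $\Tc_i^c$ block instead of your partial-fraction/distinct-poles argument.
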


\begin{proof}
The aim is to show that the system with non-zero elements in the rows of $C$ indexed by the nodes in $\Tc$ is structurally target controllable. To this end, we will show that there exists a choice of parameters of the system such that its output controllability matrix is full row-rank.

By Lemma \ref{lem:ctrb_matrix_struct}, the output controllability matrix of an elementary Christmas tree can be reduced via permutation matrices to a block diagonal matrix of the form \eqref{eq:hatRo}, where each block is associated with a periodic class. We next focus on the analysis of a single diagonal block $\hat{\Rc}_{k}^{(i)}=D^{(i)}\Rc_{k}^{(i)}$. Suppose that there exists a non-zero choice of parameters such that for every $k$ sufficiently large $\Rc_{k}^{(i)}$ has full row-rank. This guarantees that there exists a non-zero choice of parameters such that for every $k$ sufficiently large $\hat{\Rc}_{k}^{(i)}$ has full row-rank, since the diagonal entries of $D^{(i)}$ are different from zero for any choice of non-zero parameters. 
As a result, there exists $k\geq n$ and a non-zero choice of parameters such that $\hat{\Rc}_{k}^{(i)}$ has full row-rank for all $i$. For this choice of parameters 
$\hat{\Rc}_k$ has full row-rank. Since $\hat{\Rc}_k$ is related to the output controllability matrix ${\Rc}_k$ by permutation matrices, then also ${\Rc}_k$ has full row-rank.

Hence, to complete the proof, it remains to show that there exists a non-zero choice of parameters such that $\Rc_k^{(i)}$ has full row-rank. Using the characterization of $\Rc_{k}^{(i)}$ in Lemma \ref{lem:ctrb_matrix_struct}, if $t\in \Tc^a_i$ is not backward-connected to cycles, then the row of $\Rc_{k}^{(i)}$ corresponding to node $t$ is of type T1, that is, a canonical vector. Instead, if $t\in \Tc^a_i$ is backward-connected to a cycle, then, by definition of the set $\Tc^a_i$, it must have a child $s\in \Tc_i$. In this case, the rows of $\Rc_{k}^{(i)}$ corresponding to node $t$ and $s$ are of type T2 and have, respectively, the form
\begin{align*}
    r_t&=\begin{bmatrix}
        0 & \cdots & 0 & 1 & \alpha & \alpha^2 & \alpha^3 & \cdots &
    \end{bmatrix}, \\
    r_s&=\begin{bmatrix}
        0 & \cdots & 0 & 0 & 1 & \alpha & \alpha^2 & \alpha^3 & \cdots  & 
    \end{bmatrix},
\end{align*}
where $\alpha$ denotes the weight of cycle to which $t$ and $s$ are backward-connected. Note that $r_t-\alpha r_s$ is a canonical vector.
We can apply the above elementary row reduction procedure to all nodes in $\Tc^a_i$ with children, starting from nodes with shortest distance from the input $u$. In the resulting reduced matrix, each row corresponding to a node $t\in\Tc^a_i$ is a canonical vector with entry $1$ in position $\ell_{\text{min}}^{(i)}(t)\leq \lceil n/r \rceil$. Note that all these canonical vectors are different since, by condition 1, the nodes in $\Tc^a_i$ are connected to the input via shortest paths of different lengths. So, up to a permutation of rows, the reduced matrix has the form
\begin{align}\label{eq:czi_struct}
\Check{\Rc}_{k}^{(i)} = \begin{bmatrix}
            E_i & 0\\
            * & R_i
        \end{bmatrix},
\end{align}
where $E_i$ is a matrix with $|\Tc^a_i|$ rows and $s\leq \lceil n/r \rceil$ columns
which are all different canonical vectors. $R_i$ has $|\Tc^c_i|$ rows corresponding to nodes in $\Tc^c_i$. Each row of $\Check{\Rc}_{k}$ corresponding to a node $t \in \Tc^c_i$ is of type T2, and has zero entries only up to position $\ell_{\text{min}}^{(i)}(t)$. 
The inequality $\ell_{\text{min}}^{(i)}(t)\leq \lceil n/r \rceil$, $\forall t\in\Tc^c_i$, implies that for every $k$ sufficiently large $R_i$ contains a square submatrix of the form $\Delta_i V_i$
where $V_i$ is a square matrix of the Vandermonde form
\begin{align*}
        V_i=\begin{bmatrix}
             1 & \alpha_{1} & \alpha_{1}^2 & \dots &\\
             1 & \alpha_{2}& \alpha_{2}^2 & \cdots &  \\
           \vdots & \vdots  & \vdots & \vdots\\
            1 & \alpha_{|\Tc^c_i|}& \alpha_{|\Tc^c_i|}^2 &  \cdots
        \end{bmatrix},
\end{align*}
with $\alpha_i$ denoting the weights of the cycles to which the nodes in $\Tc^c_i$ are backward-connected, and $\Delta_i$ is a diagonal matrix, whose $i$-th diagonal entry is a non-negative power of $\alpha_i$. 

Since all nodes in $\Tc^c_i$ are backward-connected to different cycles, by condition 2, there exists a choice of parameters of the system such that $\alpha_i\ne \alpha_j$ for all $i\ne j$. For such choice of parameters, $V_i$ is invertible \cite[Sec.~0.9.11]{horn2012matrix}, and consequently $\Check{\Rc}_{k}^{(i)}$ has full row-rank, for every $k$ sufficiently large. To conclude, note that, since $\Check{\Rc}_{k}^{(i)}$ is related to $\Rc_{k}^{(i)}$ through left multiplication by an invertible matrix, also $\Rc_{k}^{(i)}$ has full row-rank. 
\end{proof}

The following corollary of Theorem \ref{thm:target_contr} gives a simpler graphical condition for structural target controllability. 
\begin{corollary}\label{cor:target_contr}
Consider an elementary Christmas tree and a set of nodes $\Tc = \Tc_0 \cup \cdots \cup \Tc_{r-1}$, where $\Tc_i \subseteq \Xc_i$. The set $\Tc$ is structurally target controllable if, for all $i = 0, \dots, r-1$:
\begin{enumerate}
    \item $\ell_{\text{min}}(t) \neq \ell_{\text{min}}(s)$ for all nodes $t,s\in \Tc_i$, $t\ne s$, that are not backward-connected to cycles, and
    \item all the remaining nodes in \(\Tc_i\) are backward-connected to different cycles.
\end{enumerate} 
\end{corollary}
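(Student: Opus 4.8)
The plan is to obtain the Corollary as a direct consequence of Theorem \ref{thm:target_contr}, by showing that under the Corollary's hypotheses the partition \eqref{eq:partition-Ti} degenerates into the simpler split between nodes not backward-connected to cycles and nodes backward-connected to cycles. For each $i$, I would write $\Tc_i = N_i \cup C_i$, where (for the purpose of this argument) $N_i$ denotes the nodes of $\Tc_i$ that are not backward-connected to any cycle and $C_i$ denotes those that are backward-connected to a cycle. With this notation, condition~1 of the Corollary is precisely the statement $\ell_{\text{min}}(t)\ne\ell_{\text{min}}(s)$ for distinct $t,s\in N_i$, and condition~2 is the statement that the nodes of $C_i$ are backward-connected to pairwise distinct cycles.

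The key step is the observation that condition~2 of the Corollary forbids any parent--child relation inside $C_i$. Indeed, suppose a node $w\in C_i$ had a child $v\in\Tc_i$. By the definition of child, $v$ would be backward-connected to the \emph{same} cycle as $w$, and hence $v\in C_i$ as well; moreover $\ell_{\text{min}}(v)=\ell_{\text{min}}(w)+r\ne\ell_{\text{min}}(w)$, so $v\ne w$. But then $w$ and $v$ would be two distinct nodes of $C_i$ sharing a cycle, contradicting condition~2. Consequently, no node of $C_i$ possesses a child in $\Tc_i$. Unwinding the definition of the partition \eqref{eq:partition-Ti} then gives $\Tc_i^a = N_i$ (the alternative of being backward-connected to a cycle and having a child in $\Tc_i$ is now impossible), and therefore $\Tc_i^c = \Tc_i\setminus\Tc_i^a = C_i$.

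With these identifications in hand, the verification is immediate: condition~1 of the Corollary becomes exactly condition~1 of Theorem \ref{thm:target_contr} (distinct shortest-path lengths on $\Tc_i^a = N_i$), and condition~2 of the Corollary becomes exactly condition~2 of the Theorem (distinct cycles for the nodes of $\Tc_i^c = C_i$). Applying Theorem \ref{thm:target_contr} then yields structural target controllability of $\Tc$, completing the argument. I do not anticipate a serious obstacle; the only step requiring genuine care is the implication that condition~2 excludes parent--child pairs within $C_i$, which relies on invoking the definition of child to force the child into the same cycle as its parent. Once that is established, everything reduces to checking that the two lists of hypotheses coincide.
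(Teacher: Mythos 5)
Your proposal is correct and follows essentially the same route as the paper: the paper's proof is a one-line observation that under the Corollary's hypotheses the partition \eqref{eq:partition-Ti} reduces to one in which $\Tc_i^a$ contains only nodes not backward-connected to cycles, after which Theorem \ref{thm:target_contr} applies verbatim. Your argument simply fills in the detail the paper leaves implicit, namely that condition~2 rules out any parent--child pair among the cycle-connected target nodes (since a child shares its parent's cycle), which is exactly why $\Tc_i^a$ degenerates as claimed.
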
\smallskip
\begin{proof}
The result follows directly from Theorem \ref{thm:target_contr} by noting that $\Tc_i$ satisfying conditions 1 and 2 is described by a partition \eqref{eq:partition-Ti} in which $\Tc_i^a$ contains only nodes that are not backward-connected to cycles.
\end{proof}

\begin{remark}
Although Theorem \ref{thm:target_contr} (and Corollary \ref{cor:target_contr}) is stated for elementary Christmas trees, it can also be used to determine whether a set of nodes is structurally target controllable in structured systems described by a general graph $\Gc$. In fact, the conditions of Theorem \ref{thm:target_contr} can be applied to any set of nodes of $\Gc$ {\em covered} by an elementary Christmas tree. Note in particular that, since a directed tree is a particular instance of an elementary Christmas tree, these conditions apply to sets of nodes covered by directed trees. In this case, all sets $\Tc_i^c$ are empty and Theorem \ref{thm:target_contr} is equivalent to the $k$-walk condition of Proposition \ref{prop:kwalk}.
\end{remark}

In the following example, we show that Theorem \ref{thm:target_contr} allows us to find structurally target controllable sets that cannot be found using the $k$-walk and stem-cycle conditions discussed in Section \ref{sec:existing}.

\begin{example}\label{ex:thm}
We consider the structured system described by the graph $\Gc$ in Fig.~$\ref{fig:adv_ex}$. We verify that $\Tc=\{1,2,4,5,6,11,13,14\}$ is structurally target controllable by showing that it meets the conditions outlined in Theorem \ref{thm:target_contr}. Since the system features two periodic classes ($r=2$), we partition the target set $\Tc$ in two sets
\begin{align*}
    \Tc_0=\{1,5,11,13\},\ \Tc_1=\{2,4,6,14\}.
\end{align*}
Each of these sets are further partitioned according to \eqref{eq:partition-Ti}:
\begin{align*}
    \Tc_0^a=\{1,13\}, \ \Tc_0^c=\{5,11\}, \ \Tc_1^a=\{2,4\}, \ \Tc_1^c=\{6,14\}.
\end{align*}
Note that $\Tc_0^a$ contains one node not backward-connected to cycles (node $1$) and one node with a child (node $13$ with child $11$). Instead, $\Tc_1^a$ contains
two nodes with children  (node $2$ with child $4$, and node $4$ with child $6$). The sets $\Tc_0^a$ and $\Tc_1^a$ satisfy condition 1 in Theorem \ref{thm:target_contr}; in fact, all nodes in these sets are reached by shortest paths of different lengths. 
Further, the nodes in $\Tc_0^c$ are backward-connected to different cycles and the same holds for $\Tc_1^c$. Therefore, condition 2 of Theorem \ref{thm:target_contr} is also satisfied, and we conclude that $\Tc$ is structurally target controllable.

\begin{figure}
    \centering
    \begin{tikzpicture}[shorten >=1pt, auto, ultra thick,
   node_style/.style={draw, circle,thick, fill=white, minimum size=0.5cm,font=\footnotesize,inner sep=0cm},every edge/.append style = {thick}]
            
        \node[node_style] (u) at (0,0) {$u$};
        \node[node_style,] (1) at (1,1.2) {$1$};
        \node[node_style,fill=black, text=white] (2) at (1+1.2,1.2) {$2$};
        \node[node_style] (3) at (1+2*1.2,1.2) {$3$};
        \node[node_style,fill=black, text=white] (4) at (1+3*1.2,1.2) {$4$};
        \node[node_style] (5) at (1+4*1.2,1.2) {$5$};
        \node[node_style,fill=black, text=white] (6) at (1+5*1.2,1.2) {$6$};

        \node[node_style] (7) at (1,-1.2) {$7$};
        \node[node_style,fill=black, text=white] (8) at (1+1.2,-1.2) {$8$};
        \node[node_style] (9) at (1+2*1.2,-1.2) {$9$};
        \node[node_style,fill=black, text=white] (10) at (1+3*1.2,-1.2) {$10$};
        \node[node_style] (11) at (1+4*1.2,-1.2) {$11$};
        \node[node_style,fill=black, text=white] (12) at (1+5*1.2,-1.2) {$12$};
        
        \node[node_style] (13) at (1+2*1.2,0) {$13$};
        \node[node_style,fill=black, text=white] (14) at (1+3*1.2,0) {$14$};

        \draw[-latex,semithick,line join=round,
             decorate, decoration={
                zigzag,
                segment length=4,
                amplitude=.9,
                post=lineto,
                post length=7.5pt
            }] (u) -- (1);
            \draw[-latex,semithick,line join=round,
             decorate, decoration={
                zigzag,
                segment length=4,
                amplitude=.9,
                post=lineto,
                post length=7.5pt
            }] (u) -- (7);
            
        \draw[-latex] (1) edge node{} (2);
        \draw[-latex] (2) edge[bend left] node{} (3);
        \draw[-latex] (3) edge node{} (4);
        \draw[-latex] (5) edge node{} (6);
        \draw[-latex] (4) edge node{} (5);

        \draw[-latex] (7) edge node{} (8);
        \draw[-latex] (8) edge[bend left] node{} (9);
        \draw[-latex] (9) edge node{} (10);
        \draw[-latex] (10) edge node{} (11);
        \draw[-latex] (11) edge node{} (12);

        \draw[-latex] (8) edge node{} (13);
        \draw[-latex] (13) edge node{} (14);
        
        \draw[-latex] (9) edge[bend left] node{} (8);
        \draw[-latex] (3) edge[bend left] node{} (2);
        
    \end{tikzpicture}
    \vspace{-0.25cm}
    \caption{Graph $\Gc$ of an elementary Christmas tree network used for comparing the conditions in Theorem \ref{thm:target_contr} with the existing structural target controllability conditions. White nodes belong to the set $\Xc_1 \subset \Xc$, black nodes are in the set $\Xc_0 \subset \Xc$. }
    \label{fig:adv_ex}
\end{figure}
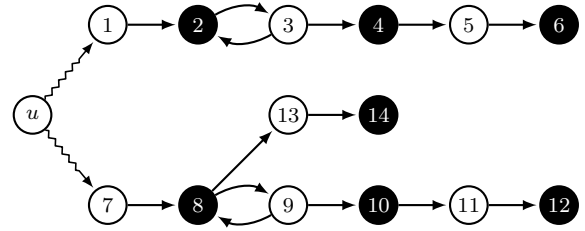

Notice that $\Tc$ does not satisfy any of the existing conditions in Sec.~\ref{sec:existing}. Indeed, nodes $4$ and $14$ are reached by shortest paths of equal length, and the same applies to nodes $5$ and $11$. This implies that structural target controllability cannot be verified with the $k$-walk condition of Proposition \ref{prop:kwalk}. Moreover, the same nodes $4$, $5$, $11$, $14$ cannot be covered by a disjoint union of stems and cycles; consequently, the stem-cycle condition of Proposition \ref{prop:stemcycle} does not apply either.~\exampleend
\end{example}

To conclude this section, we point out that Theorem \ref{thm:target_contr} provides a sufficient but not necessary condition for a set to be structurally target controllable. In fact, there may exist sets of nodes that are structurally target controllable but do not satisfy the conditions of Theorem \ref{thm:target_contr}. For instance, for the system of Example \ref{ex:thm}, the set $\Tc=\{2,6,8,12\}$ can be shown to be structurally target controllable. In this case, with reference to the partition in equation \eqref{eq:partition-Ti}, all nodes in $\Tc$ belong to $\Tc_1^c$, and there are couples $\{2,6\}$, $\{8,12\}$ whose nodes are backward-connected to the same cycles. Thus, condition 2 of Theorem  \ref{eq:partition-Ti} is not satisfied for such $\Tc$.

\section{Conclusions}\label{sec:conclusions}
In this work, we focused on structural target controllability of directed networks. 
After providing a brief overview of existing graph-theoretic conditions, we introduced Christmas trees, a new class of network systems. 
Christmas trees represent one of the simplest classes of input-accessible networks that contains non-structurally controllable systems. Consequently, determining whether a subset of their nodes is target controllable is non-trivial.

In the paper, we took the first steps in the controllability analysis of these networks starting from their simplest instance, that is, elementary Christmas trees. 
Specifically, we derived graph-theoretic conditions for structural target controllability in elementary Christmas trees. Our conditions extend existing ones, enabling the identification of target controllable node sets that previous criteria fail to detect. Moreover, they can be applied to general network systems using spanning subgraphs of Christmas tree class.

We believe that the study of this class can provide valuable insights for developing new and more comprehensive conditions for structural target controllability, just as cacti have been essential in understanding full-state structural controllability.
As pointed out in an example, our conditions are only sufficient for structural target controllability.
Therefore, future research directions include finding a complete characterization of the target controllable sets in elementary Christmas trees, and extending such characterization to general Christmas trees. We believe that tackling these problems will ultimately lead to a more complete characterization of structural target controllability in general network systems.

\balance
\bibliographystyle{ieeetr}
\bibliography{bib-latex}

\end{document}